\definecolor{ao}{rgb}{0.0, 0.5, 0.0}
\newtheorem{theorem}{Theorem}[section]
\newtheorem{lemma}[theorem]{Lemma}
\newtheorem{definition}[theorem]{Definition}
\newcommand{\hide}[1]{}
\newif\ifappendix
\let\csname putmaybeappendix#1\endcsname\BODY%
\newcommand\csname putmaybeappendix#1\endcsname{}\BODY%
\newcommand{\putmaybeappendix}[1]{\csname putmaybeappendix#1\endcsname}
\title{Space Complexity of Implementing Large Shared Registers}
\author{Yuanhao Wei}
\affil{Carnegie Mellon University}
\affil{yuanhao1@cs.cmu.edu}
\begin{document}
	\maketitle
	
	\begin{abstract}
	    We prove two new space lower bounds for the problem of implementing a large shared register using smaller physical shared registers. We focus on the case where both the implemented and physical registers are single-writer, which means they can be accessed concurrently by multiple readers but only by a single writer. To strengthen our lower bounds, we let the physical registers be atomic and we only require the implemented register to be regular. Furthermore, the lower bounds hold for obstruction-free implementations, which means they also hold for lock-free and wait-free implementations.
	    
	    If $m$ is the number values representable by the large register and $b$ is the number of values representable by each physical register, our first lower bound says that any obstruction-free implementation that has an invisible reader requires at least $\lceil \frac{m-1}{b-1} \rceil$ physical registers. A reader is considered invisible if it never writes to shared registers. This lower bound is tight for the invisible reader case. We also prove a \hbox{$\lceil \min(\frac{m-1}{b-1}, r+\frac{\log{m}}{\log{b}}) \rceil$} space lower bound for the general case, which covers both visible and invisible readers. In this bound, $r$ represents the number of readers.
	\end{abstract}
	
	\nocite{*}
	\section{Introduction}
\label{sec:intro}

In most shared memory multi-processor systems, processes communicate with each other by reading from and writing to shared registers. Modern systems typically allow you to atomically read and write some constant number of bits.
To read and write larger amounts of data, you would have to implement a larger register using the smaller physical ones provided by the system. This paper studies the space complexity required by such implementations. We define the space complexity of an implementation to be the number of physical registers it uses. The step complexity of an operation is defined to be the worst case number of steps needed to complete the operation.

This problem varies in several dimensions. There are three
common correctness conditions for shared registers, \emph{safe}, \emph{regular}, and \emph{atomic}, which were introduced by Lamport back in 1986 \cite{lamport1986interprocess}. Atomicity is the strongest of the three conditions and safety is the weakest. All atomic registers are regular and all regular registers are safe. In this paper we only consider regular and atomic registers. Shared registers can also differ in the number of readers and the number of writers allowed to access the register concurrently. In this paper we only consider \emph{single-writer} (SW) registers, which are registers that can be accessed concurrently by multiple readers but only a single writer. Finally, we restrict our attention to non-blocking implementations. This excludes the use of locks and other blocking techniques. There are three common non-blocking progress guarantees that appear in the literature: \emph{obstruction-freedom}, \emph{lock-freedom}, and \emph{wait-freedom}. Obstruction-freedom is the weakest natural non-blocking guarantee and it includes all lock-free and wait-free algorithms. Wait-freedom is the strongest guarantee and it ensures that every process makes progress regardless of how the processes are scheduled. The terms \emph{obstruction-freedom}, \emph{regular}, and \emph{atomic} are defined formally in Section \ref{sec:model}.

Table \ref{tbl:impl} lists some previous implementations of an $m$-value SW register from $b$-value SW registers. The number of readers is represented by $r$. A reader is considered to be invisible if it never writes to shared registers. The \textquotesingle{}Invisible\textquotesingle{} column contains a \textquotesingle{}yes\textquotesingle{} whenever the implementation has at least one invisible reader. All implementations listed in the table are wait-free. The register type of an implementation is atomic, if it implements a large atomic register using smaller atomic registers. Similarly we say it's regular if it is a regular from regular implementation. Some papers \cite{Larsson} assume additional atomic primitives like, swap, fetch-and-add, and compare-and-swap, but we focus on implementations without any additional primitives.

{\renewcommand{\arraystretch}{1.5}
\begin{table}[H]
\small
\label{tbl:impl}
	\centering
    \begin{tabular}{ | l | c | c | c | c | c | c | }
    \hline
    \textbf{Prior Work} & \textbf{Register Type} & \textbf{Invisible?} & \textbf{Space} & & \textbf{Read} & \textbf{Write}  \\
    \hline
     Peterson \cite{Peterson} & atomic & no & $\Theta(r \frac{\log m}{\log b})$ & &$\Theta(\frac{\log m}{\log b})$ & $\Theta(r \frac{\log m}{\log b})$ \\
    \hline
    Chaudhuri and Welch \cite{chaudhuri1994bounds} & regular & yes & $\Theta(\frac{m}{b})$ & & $\Theta(\frac{\log m}{\log b})$ & $\Theta(\frac{\log m}{\log b})$  \\
    \hline
    Vidyasankar \cite{vidyasankar1988converting} & atomic & yes & $\Theta(\frac{m}{\log b})$ & & $\Theta(\frac{m}{\log b})$ & $\Theta(\frac{m}{\log b})$  \\
    \hline
    Chaudhuri, Kosa and Welch \cite{chaudhuri2000one} & regular, atomic & yes & $\Theta(m^2)$ & & $\Theta(m^2)$ & 1 \\
    \hline
    Chen and Wei \cite{chen2017step} & atomic & yes & $\Theta(\frac{m^2}{b^2})$ & & $\Theta(\frac{\log m}{\log b})$ & $\Theta(\frac{\log m}{\log b})$  \\
    \hline
     Chen and Wei \cite{chen2017step} & atomic & no & $O(r\frac{\log m}{\log b})$ & & $\Theta(\frac{\log m}{\log b})$ & $\Theta(\frac{\log m}{\log b})$ \\
    \hline
    \end{tabular}
    \caption{Wait-free $m$-value SW register implementations from $b$-value SW registers}
    \label{complexity-table}
\end{table}}

Notice that all implementations with an invisible reader use at least $\Theta(\frac{m}{b})$ space and all implementations with visible readers use at least $\Theta(r)$ space. This paper helps explain the high space usage of these implementations by showing that any obstruction-free, regular from atomic implementation requires at least $\lceil \frac{m-1}{b-1} \rceil$ space in the invisible reader case and $\lceil \min(\frac{m-1}{b-1}, r+\frac{\log{m}}{\log{b}}) \rceil$ space in the general case. Chaudhuri and Welch's implementation \cite{chaudhuri1994bounds} shows that our lower bound is asymptotically tight for the invisible reader case. Their implementation was first introduced for the $b=2$ case. Later, Chen and Wei \cite{chen2017step} show how it can be generalized for any $b \geq 2$. When $m$ is a power of $b$, the number of registers used by the implementation is $\frac{m-1}{b-1}$, which matches our lower bound exactly.

There are some previous space lower bounds for this problem. Chaudhuri and Welch prove multiple lower bounds in \cite{chaudhuri1994bounds}. The one that is most relevant to this paper says that any regular from regular implementation where $b=2$ requires at least $\lceil \max(\log m + 1, 2\log m - \log \log m - 2) \rceil$ space. Chaudhuri, Kosa and Welch \cite{chaudhuri2000one} prove that any regular from regular implementation where $b=2$ and the writer only performs a single operation requires $\Omega(m^2)$ space. This shows that their one-write algorithm is space optimal. They also prove a space lower bound of $2m-1-\lceil \log m\rceil$ for a slightly more general case. Berger, Keidar and Spiegelman \cite{berger2018integrated} consider a class of algorithms where each read operation has to see at least $\tau \geq 2$ values written by the same write operation before the read operation is allowed to return. They show that in this setting, any wait-free, regular from atomic implementation requires $\tau m$ space for the invisible reader case and $\tau + (\tau-1)\min(m-1, r)$ space for the general case. This lower bound helps explain the space complexity of Peterson's \cite{Peterson} as well as Chen and Wei's \cite{chen2017step} implementation because $\tau = \frac{\log m}{\log b}$ in both implementations. However, the lower bound does not apply to any of the invisible reader algorithms from Table \ref{tbl:impl} because $\tau$ equals $0$ or $1$ in all those algorithms.

We define some important terms in Section \ref{sec:model} and we prove both our lower bounds in Section \ref{sec:lb}.


	\section{Model}
\label{sec:model}

A {\em single-writer (SW) register} $R$ is a shared register where only one process can perform write operations and any number of processes can perform read operations. We say that a process owns $R$ if it can write to $R$. We will work in the standard asynchronous shared memory model \cite{attiya2004distributed} with $r$ readers and one writer, which communicate through shared physical registers. Processes may fail by crashing. 

In our model, an \emph{execution} is an alternating sequence of \emph{configurations} and \emph{steps} $C_0$, $e_1$, $C_1$, $e_2$, $C_2$, $\dots$, where $C_0$ is an \emph{initial configuration}. Each step is either a read or write of a physical register. Configuration $C_i$ consists of the state of every register and every process after the step $e_i$ is applied to  configuration $C_{i-1}$. 

A register is \emph{atomic} if its operations are linearizable \cite{HW}. A register is \emph{regular} if the value returned by each read is either the value written by the last write operation completed before the first step of the read or the value written by a write operation concurrent with the read operation. Note that every atomic register is also regular.


The rest of this paper will focus on the \emph{obstruction-free} progress guarantee which says that if at any point in the execution, an operation is allowed to run in isolation (with all other processes suspended), then it will terminate in a finite number of steps. Any \emph{wait-free} and \emph{lock-free} algorithm is also obstruction-free.
    
\section{Space Lower Bounds}
\label{sec:lb}
    This section proves two new lower bounds on the number of atomic registers needed to implement a large regular register. The term \textquotesingle{}implementation\textquotesingle{} will frequently be used as a shorthand which means \textquotesingle{}obstruction-free implementation of a regular SW register from smaller atomic SW registers\textquotesingle{}. The first lower bound, Theorem \ref{thm:lb_inv}, applies to all implementations with an invisible reader. This lower bound can be used to easily prove a more general lower bound that holds for the visible reader case as well. This is done in Theorem \ref{thm:lb_vis}. 
    
    Throughout the proofs, there are three important algorithm parameters that come up repeatedly: $m$, the number of values that can be represented by the simulated register, $n$, the number of physical registers in the implementation, and finally, $S$, the total fanout of the implementation. If each of the physical registers can represent $b$ values, then the total fanout is simply $nb$. However, it greatly simplifies the proofs to consider implementations that use physical registers of different sizes.
    Below is the definition of \textquotesingle{}total fanout\textquotesingle{} for this more general setting.
    
    \begin{definition}
        Let $A$ be an implementation and let $b_i$ be the number values that can be represented by the $i^\text{th}$ physical register. The \textbf{total fanout} of $A$ is defined to be the sum of all the $b_i$'s.
    \end{definition}
    
    Here is an overview of the proofs in this section. The first proof is for the main technical lemma, Lemma \ref{lem:lb_main}, which says if there exists an $m$-value register implementation with $S$ fanout and an invisible reader, then there exists an $(m-1)$-value register implementation with $S-1$ fanout and an invisible reader. Once this lemma is established, the rest of the proof is straight forward. Lemma \ref{lem:inv} uses Lemma \ref{lem:lb_main} inductively to argue that $S$ must be large when $m$ is large. The invisible reader lower bound, Theorem \ref{thm:lb_inv}, is basically a special case of Lemma \ref{lem:inv} where all the physical registers have the same size. And finally a short proof of the general lower bound, Theorem \ref{thm:lb_vis}, can be derived using the invisible reader lower bound.
    
    The main idea behind Lemma \ref{lem:lb_main} is to look at the decision tree of the invisible reader. The internal nodes of the decision tree are labeled by physical register and the leaves are labeled by return values. We keep minimizing the decision tree until we find a leaf with value $v$ with a parent such that there exists a configuration $C$ where the invisible reader is at the parent (i.e. it's just about to read the register at the parent) and for it to be \textquotesingle{}unsafe\textquotesingle{} for the invisible reader to return $v$. It is \textquotesingle{}safe\textquotesingle{} to return a value at a configuration if the reader can do so without violating the semantics of regular registers. It is \textquotesingle{}unsafe\textquotesingle{} otherwise. After configuration $C$, if we never write the value $v$ again then it will forever be unsafe for the invisible reader to return the value $v$. This means that the register at the parent node can never again point to the leaf with value $v$ because if it did, the invisible reader paused at parent might execute and return $v$, an unsafe value. So the register at parent can take on one less value. This register could also appear in other parts of the decision tree, and it would have a reduced value set everywhere it appears. Therefore by removing the value $v$, we can reduce the total fanout of the implementation by 1. A more detailed version of this argument appears in the proof.
    
    Before diving into the main technical lemma, we first define some useful notation. Note that the definition doesn't care how many readers there are as long as there is at least one invisible reader.
    
    \begin{definition}
    \label{def:e_pred}
        The predicate $E(m, n, S)$ says that there exists an obstruction-free implementation of an $m$-valued regular SW register using $n$ atomic SW physical registers with total fanout $S$ such that at least one reader is invisible.
    \end{definition}
    
    \begin{lemma}
    \label{lem:lb_main}
        For $m \geq 2$, $E(m, n, S)$ implies $E(m-1, n, S-1)$.
    \end{lemma}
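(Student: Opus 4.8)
The plan is to transform a given $m$-value, $n$-register, fanout-$S$ implementation $A$ (with an invisible reader, so $E(m,n,S)$ holds) into an $(m-1)$-value, $n$-register, fanout-$(S-1)$ implementation witnessing $E(m-1,n,S-1)$, by two surgical modifications: (i) forbid the writer from ever writing one carefully chosen value $v$, which drops the number of representable values from $m$ to $m-1$; and (ii) delete one value $x$ from the domain of one physical register $R$, which drops the total fanout from $S$ to $S-1$ while keeping the number of registers at $n$. Modification (i) is trivially safe, since restricting the writer preserves regularity and obstruction-freedom, so the entire burden is to choose $v$, $R$, and $x$ so that modification (ii) is also safe, i.e., so that $R$ never actually holds the value $x$ in any execution of the $v$-restricted implementation. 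Once that holds, erasing $x$ from $R$'s domain changes no behavior, and the resulting implementation inherits regularity, obstruction-freedom, and its invisible reader from $A$.

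Because the reader is invisible, a single read is a deterministic function of the sequence of physical-register values it observes, so it is described by a decision tree whose internal nodes are labeled by physical registers, whose branches are labeled by the observed values, and whose leaves are labeled by return values; obstruction-freedom guarantees that a solo run terminates, so every root-to-leaf path realized in some configuration is finite and every representable value labels at least one leaf. The key device is to \emph{freeze} the invisible reader. Suppose a node $p$ reads register $R$ and has a branch labeled $x$ leading directly to a leaf labeled $v$, and suppose there is a configuration $D$ in which the invisible reader is poised at $p$ (about to read $R$) while the implemented value is some $u\neq v$. I take $D$ as the initial configuration of the new ($v$-restricted, $R$-shrunk) implementation and leave that read permanently unscheduled. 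Since $v$ is never written, regularity makes it unsafe to return $v$ in every configuration; hence no reachable configuration can have $R=x$, because scheduling the frozen read there would make it observe $x$, descend to the leaf, and return the unsafe value $v$. Invisibility is exactly what makes this legitimate: a parked invisible reader never perturbs the steps of the writer or the other readers, so the reachable configurations of the new implementation coincide with those of the augmented system in which the reader stays poised at $p$. This establishes that $R$ never holds $x$, validating modification (ii).

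The main obstacle is producing such a triple $(v,p,x)$, and this is where the decision tree must be minimized. My first attempt is extremal: run the reader solo from a quiescent configuration whose value is $u$; it terminates at a leaf labeled $u$, and at any point just before an ancestor of that leaf it is unsafe to return every $v\neq u$. If some ancestor $p$ of the leaf has a sibling branch pointing directly to a leaf labeled with some $v\neq u$, then $(v,p,x)$ is precisely the triple I need, with $D$ the configuration reached by that solo run. The difficulty is that, a priori, the tree need not contain such a conveniently placed off-path leaf: a node might branch only into non-leaves or into same-valued leaves. To force the good configuration to exist I plan to repeatedly prune the decision tree, removing branches that are exercised in no reachable configuration and collapsing redundant same-value leaves. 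Each such pruning either already removes a value from some register's domain -- which finishes the proof immediately, with the even stronger conclusion that the fanout drops without decreasing $m$ -- or else strictly tightens the tree; when no further pruning is possible the tree is ``minimal,'' and I expect the extremal argument to go through and supply the required unsafe leaf. I also need the chosen register to satisfy $b_R\geq 2$, so that deleting $x$ leaves a nondegenerate register and keeps the count at $n$; a node whose register has a single value produces a constant return and cannot be the parent I seek once $m\geq 2$, so minimality rules this case out.

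Finally, with $(v,p,x)$ and the ``$R$ never holds $x$'' property in hand, verifying $E(m-1,n,S-1)$ is bookkeeping: the new implementation is single-writer and uses the same $n$ registers with total fanout reduced by exactly one; it represents the $m-1$ values other than $v$; it is obstruction-free and regular because each of its executions is an execution of $A$ with the writer restricted and with a value that is never physically present simply pruned away; and it retains the unchanged, still-invisible reader of $A$. I expect the delicate minimization step to be the crux, while the freezing argument and the final verification are comparatively routine.
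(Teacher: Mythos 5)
Your overall architecture is the same as the paper's: model the invisible reader as a decision tree, locate a leaf $\ell$ with value $v$, parent $p$ reading register $R$ via branch $x$, and a configuration where the reader is poised at $p$ but returning $v$ is unsafe; then exploit invisibility to argue that if $v$ is never written again, $R$ can never hold $x$ (else the parked reader could be scheduled and return $v$), so deleting $x$ from $R$'s domain drops the fanout to $S-1$ while dropping $m$ by one. Your freezing argument and the final bookkeeping are essentially identical to the paper's.

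The gap is exactly where you flag it: the minimization. Your pruning rules (delete branches exercised in no reachable configuration; collapse a node all of whose children are same-valued leaves) do not provably produce the triple, and the solo-run extremal argument can fail on a tree that is ``minimal'' in your sense. Concretely, a node $p$ can have one leaf child labeled $v$ and one internal child, with the $v$-branch reachable, yet be such that \emph{every} configuration in which the reader sits at $p$ has a write of $v$ pending or just completed --- so returning $v$ there is always safe and yields no unsafe configuration; meanwhile every ancestor along your solo path from a quiescent $u$-configuration may have only internal or $u$-labeled siblings. Neither of your rules fires on such a node, and your extremal search finds nothing. The missing idea is the correct dichotomy, which is what the paper uses as its single collapsing rule: take a leaf $\ell$ \emph{closest to the root}, with value $v$ and parent $p$; either returning $v$ is safe in \emph{all} configurations where the reader is at $p$ --- in which case the entire subtree rooted at $p$ may be replaced by the leaf $\ell$ (the reader returns $v$ without reading, and this is regular precisely because $v$ is always safe there; invisibility guarantees the set of reachable configurations is unchanged) --- or there exists a configuration where returning $v$ is unsafe, which is the configuration you need. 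Choosing the closest leaf makes termination immediate (the root-to-nearest-leaf distance strictly decreases, and cannot reach zero while $m \geq 2$ since a constant-returning reader would violate regularity). One further small point: your configuration $D$ should be chosen so that no write of $v$ is pending (your quiescent solo run happens to ensure this, but the general unsafe configuration produced by the dichotomy need not be quiescent; the paper handles this by running any pending write to completion before restricting the writer).
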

    
    \begin{proof}
        Suppose $E(m, n, S)$ is true. Then there exists an algorithm $A_{m}$ which satisfies the conditions from Definition $\ref{def:e_pred}$. Our goal is to construct an algorithm $A_{m-1}$ to show that $E(m-1, n, S-1)$ is also true.
        We begin by setting $A'_{m} = A_{m}$ and running the following process on $A'_{m}$. The goal of this process is to minimize $A'_{m}$ until we find a decision tree node and a configuration with desirable properties. We say that it is \textquotesingle{}safe\textquotesingle{} for a reader to return a value at a configuration $C$ if the reader can return the value without violating regular register semantics.
        
        \begin{enumerate}
            \item Let $T$ be the decision tree of an invisible reader in algorithm $A'_{m}$. Let $r$ be the reader process that runs this decision tree.
            \item Consider the set of leaves in $T$ that are closest to the root. Let $\ell$ be any leaf in this set and let $v$ be the value of $\ell$.
            \item Since $m \geq 2$, $\ell$ can't be the root of $T$, so $\ell$ must have some parent node $p$.
            \item If it is safe for $r$ to return $v$ in all configurations where $r$ is at node $p$, then replace the subtree rooted at $p$ with the leaf $\ell$ (this replacement maintains the correctness of the decision tree). Repeat from step 1 using this new algorithm $A'_{m}$. 
            \item Otherwise, we know that there exists a configuration $C$ where reader $r$ is at node $p$ and it is not safe for $r$ to return $v$. We have found the decision tree node $p$ and the configuration $C$ that we were looking for, so the process terminates.
        \end{enumerate}
        
        This process is guaranteed to terminate within a finite number of iterations because $A'_{m}$ is initially obstruction free. This means that there's a finite number of nodes between the root of $T$ and its closest leaf in the initial iteration. Each iteration reduces this distance by 1, so the process will eventually terminate.
        
        Before we get to the main part of the proof, we will take a break and fix a minor technical issue with $A'_{m}$. In step 4 of the process we may have deleted some registers and reduced the total fanout of $A'_{m}$. Ideally we would like $A'_{m}$ to have the same register count and total fanout as the original $A_{m}$. This can be achieved by \textquotesingle{}padding\textquotesingle{} $A'_{m}$ with dummy registers until it reaches $n$ register and $S$ total fanout. These registers do not impact the algorithm, they are just there to increase the space complexity and total fanout. In general if an algorithm uses $x$ registers and has $y$ total fanout, we can pad the algorithm so that it uses $x' > x$ registers and has $y' > y$ total fanout as long as $y' - y \geq x' - x$ (since adding a register increases the total fanout by at least 1). Now we can say that $A'_{m}$ uses $n$ registers and has $S$ total fanout.
        
        Everything is in place for our main argument. If there is write in progress at configuration $C$, then run it to completion and call the resulting configuration $C'$. Otherwise, there is no pending write, so we let $C'$ equal $C$. $C'$ will be the initial state of our $(m-1)$-value register implementation. 
        
        If it is not safe for reader $r$ to return value $v$ at configuration $C$, then we know that there is no partial write of $v$ at configuration $C$. Therefore it is also not safe for $r$ to return $v$ at configuration $C'$. We will keep the reader $r$ paused at node $p$. Suppose there are no more writes of $v$ after configuration $C'$.
        Then, after configuration $C'$, $r$ will never be allowed to return $v$ (if it did, it would violate regular register semantics). This means that the node $p$ will never be allowed to point to the leaf $\ell$ after configuration $C'$ (if it did, then we would resume $r$ and $r$ would read $p$ and return $v$). Note that we do not actually need to pause the reader $r$ at node $p$. Since the reader $r$ is invisible, the other processes do not know whether or not the reader is paused, so $p$ cannot be changed to point to $\ell$. Therefore if we remove $v$ from the value set starting from configuration $C'$, the algorithm $A'_{m}$ actually implements an $m-1$ valued regular register using $n$ space and $S-1$ fanout (the fanout of register $p$ is reduced by 1 since it never again points to $\ell$). This algorithm is also obstruction free and has an invisible reader $r$, so it proves that $E(m-1, n, S-1)$ is true.
    \end{proof}
    
    The next lemma is proven by inductively applying the previous lemma and it intuitively say that $S$ must be large if $m$ is large. The $S-n+1$ term in the lemma statement looks mysterious at first, but it is actually just the number of leaves in a rooted tree with $n$ internal nodes and total fan-out $S$. In the rooted tree context, total fan-out just means the sum of the number of children at each internal node.
    
    \begin{lemma}
    \label{lem:inv}
        If $E(m, n, S)$ is true, then $S - n + 1\geq m$.
    \end{lemma}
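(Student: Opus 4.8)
The plan is to induct on $m$, using Lemma \ref{lem:lb_main} to drive the inductive step. It helps to keep in mind the tree interpretation flagged in the surrounding text: $S - n + 1$ counts the leaves of the reader's decision tree regarded as a rooted tree with $n$ internal nodes and total fanout $S$, and the lemma asserts this leaf count is at least $m$, which is exactly what we expect since the tree must be able to produce each of the $m$ representable values. The induction converts the ``one unit of fanout per value'' intuition into a clean inequality.

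For the base case $m = 1$, I would note that $E(1, n, S)$ supplies an implementation with $n$ physical registers whose sizes $b_1, \dots, b_n$ sum to $S$. Since every physical register represents at least one value, we have $b_i \geq 1$ for each $i$, so $S = \sum_{i=1}^{n} b_i \geq n$, and therefore $S - n + 1 \geq 1 = m$, as required.

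For the inductive step, assume $m \geq 2$ and that the statement holds for $m - 1$. Supposing $E(m, n, S)$ is true, Lemma \ref{lem:lb_main} immediately gives $E(m - 1, n, S - 1)$. Applying the inductive hypothesis to this predicate (same $n$, with $S$ replaced by $S - 1$) yields $(S - 1) - n + 1 \geq m - 1$, which rearranges to $S - n + 1 \geq m$. This closes the induction.

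The proof is deliberately short because all the substantive work lives in Lemma \ref{lem:lb_main}; here the only point demanding any care is the base case, where I must justify that an implementation with $n$ physical registers has total fanout at least $n$. Since this reduces to the trivial observation that each register represents at least one value, I anticipate no real obstacle—the remainder is the routine bookkeeping that accumulates the per-step fanout loss from Lemma \ref{lem:lb_main} across the $m - 1$ applications into the final bound.
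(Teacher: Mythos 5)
Your proof is correct and follows essentially the same route as the paper: induction on $m$, with the base case $m=1$ settled by the observation that $S \geq n$ since each register represents at least one value, and the inductive step driven by a single application of Lemma \ref{lem:lb_main} followed by the inductive hypothesis. No gaps.
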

    
    \begin{proof}
        This proof is by induction on $m$. In the base case where $m=1$, this lemma holds because the total fanout $S$ is always at least as large as the number of registers $n$. This means that $S-n+1 \geq 1 = m$. Now suppose that the lemma holds for some $m-1 \geq 1$. We want to show that it holds for $m$ as well. Pick any $n$ and $S$ such that $E(m, n, S)$ is true. Since $m \geq 2$, by Lemma \ref{lem:lb_main}, we know that $E(m-1, n, S-1)$ is true as well. By the inductive hypothesis, we know that $(S-1)-n+1 \geq (m-1)$, which means that $S - n + 1\geq m$ as required.
    \end{proof}
    
    \begin{theorem}
        \label{thm:lb_inv}
        Any obstruction-free implementation of an $m$-value regular SW register using $b$-value atomic physical registers where \textbf{some reader is invisible} requires $\lceil \frac{m-1}{b-1} \rceil$ space.
    \end{theorem}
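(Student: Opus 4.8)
The plan is to derive Theorem~\ref{thm:lb_inv} as a direct corollary of Lemma~\ref{lem:inv} by specializing to the case where all physical registers have the same size $b$. First I would observe that in the theorem's setting every physical register represents exactly $b$ values, so an implementation using $n$ such registers has total fanout $S = nb$. Since some reader is invisible, the predicate $E(m, n, nb)$ holds. Applying Lemma~\ref{lem:inv} to this instance gives $nb - n + 1 \geq m$.

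The next step is pure arithmetic: rearranging $nb - n + 1 \geq m$ yields $n(b-1) \geq m - 1$, hence $n \geq \frac{m-1}{b-1}$. Because $n$ is an integer (it counts physical registers), I can strengthen this to $n \geq \lceil \frac{m-1}{b-1} \rceil$, which is exactly the claimed lower bound. One should note the degenerate case $b = 1$: a register representing only one value carries no information, so the bound is only meaningful for $b \geq 2$, where the division is well-defined and the inequality is nonvacuous.

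I do not anticipate any serious obstacle here, since all the conceptual work has already been done in Lemma~\ref{lem:lb_main} and Lemma~\ref{lem:inv}; the theorem is essentially a substitution followed by a ceiling. The only point requiring mild care is justifying that the uniform-size implementation indeed instantiates the predicate $E$ with $S = nb$, and that the floor-to-ceiling step is valid because $n$ must be a nonnegative integer. The real mathematical content lives in the inductive reduction of Lemma~\ref{lem:lb_main}, which this theorem merely packages into the familiar equal-size form stated in the introduction and matched by the Chaudhuri--Welch construction.
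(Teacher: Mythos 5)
Your proposal matches the paper's own proof essentially verbatim: instantiate $E(m,n,nb)$, apply Lemma~\ref{lem:inv}, rearrange to $n(b-1) \geq m-1$, and round up using integrality of $n$. (In fact your intermediate inequality $nb - n + 1 \geq m$ is stated more carefully than the paper's, which contains an apparent typo writing $nb - b \geq m-1$ where $nb - n \geq m-1$ is meant.)
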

    
    \begin{proof}
        Let algorithm $A$ be such an implementation and let $n$ be the number of physical registers it uses. Algorithm $A$ shows that $E(m, n, nb)$ is true. Therefore by Lemma \ref{lem:inv}, we have that $nb - b \geq m-1$ which implies that $n \geq \frac{m-1}{b-1}$. Since $n$ must be an integer, we get a final lower bound of $n \geq \lceil \frac{m-1}{b-1} \rceil$.
    \end{proof}
    
    \begin{theorem}
        \label{thm:lb_vis}
        Any obstruction-free implementation of an $m$-value regular SW register using $b$-value atomic physical registers requires $\lceil \min(\frac{m-1}{b-1}, r+\frac{\log{m}}{\log{b}}) \rceil$ space, where $r$ is the number of readers.
    \end{theorem}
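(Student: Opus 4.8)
The plan is to reduce to the invisible-reader bound of Theorem~\ref{thm:lb_inv} by splitting on whether the implementation has an invisible reader. Let $A$ be an obstruction-free implementation of an $m$-value regular SW register from $b$-value atomic SW physical registers, using $n$ physical registers and having $r$ readers. Since $\min(\frac{m-1}{b-1}, r+\frac{\log m}{\log b}) \le n$ holds iff at least one of the two quantities is $\le n$, and $n$ is an integer, it suffices to show that either $n \ge \frac{m-1}{b-1}$ or $n \ge r + \frac{\log m}{\log b}$.

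First I would dispose of the easy case: if some reader of $A$ is invisible, then $A$ satisfies the hypotheses of Theorem~\ref{thm:lb_inv}, so $n \ge \lceil \frac{m-1}{b-1}\rceil \ge \frac{m-1}{b-1}$, and we are done. So assume from now on that every reader is visible, i.e.\ every reader writes to at least one physical register. In this remaining case I would bound $n$ from below by partitioning the registers according to their owner. Because the physical registers are single-writer, each register has a unique owner, so the registers written by the writer are disjoint from those written by readers, and the registers written by distinct readers are disjoint from one another. Each of the $r$ readers is visible and therefore owns at least one register, contributing at least $r$ distinct reader-owned registers. It then remains to show that the writer owns at least $\frac{\log m}{\log b} = \log_b m$ registers; adding the two disjoint counts gives $n \ge r + \log_b m$, as required.

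The heart of the argument---and the step I expect to be the main obstacle to state cleanly---is this information-theoretic claim about the writer-owned registers, which I would prove by an indistinguishability argument. For each value $v_j$ with $1 \le j \le m$, consider the execution in which the writer performs a single write of $v_j$ and runs it to completion while no reader takes any step; let $\sigma_j$ denote the resulting contents of the writer-owned registers. I claim the $\sigma_j$ are pairwise distinct. If $\sigma_j = \sigma_k$ for some $j \ne k$, then pick any reader $\rho$ and let it run a read in isolation from each of these two configurations. In both executions the reader-owned registers are in their initial state (no reader has moved) and the writer-owned registers agree by assumption, so $\rho$ observes identical shared memory throughout its entire read and must return the same value; but regularity forces it to return $v_j$ in one execution and $v_k$ in the other, a contradiction. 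Hence the writer-owned registers take at least $m$ distinct states, so if the writer owns $n_w$ registers then $b^{n_w} \ge m$, i.e.\ $n_w \ge \log_b m$.

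Combining the at-least-$r$ reader-owned registers with the at-least-$\log_b m$ writer-owned registers, which are disjoint, yields $n \ge r + \log_b m$, completing the second case; together with the first case this gives $n \ge \min(\frac{m-1}{b-1}, r+\frac{\log m}{\log b})$, and integrality of $n$ upgrades this to $n \ge \lceil \min(\frac{m-1}{b-1}, r+\frac{\log m}{\log b})\rceil$. The only delicate points to check carefully are that $\rho$ truly sees the same shared state across every step of its (multi-step) read---which holds because only the writer has written, and it wrote the same register contents in both executions---and the owner-disjointness that licenses adding the two register counts.
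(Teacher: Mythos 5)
Your proof is correct and takes essentially the same route as the paper: split on whether some reader is invisible, invoke Theorem~\ref{thm:lb_inv} in the invisible case, and in the all-visible case add the $r$ reader-owned registers (one per visible reader, disjoint by the single-writer property) to the $\log_b m$ registers the writer needs to encode $m$ values. The only difference is that you spell out the solo-write/solo-read indistinguishability argument justifying the writer's $\log_b m$ registers, a step the paper asserts without proof.
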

    
    \begin{proof}
        Let $A$ be such an implementation. If $A$ has an invisible reader then by Theorem \ref{thm:lb_inv}, the space complexity of $A$ is at least $\lceil \frac{m-1}{b-1} \rceil$. If all readers in $A$ are visible then there is at least one physical register for each reader. The writer requires $\lceil \frac{\log{m}}{\log{b}} \rceil$ additional physical registers to represent a value between 1 and $m$. So the total space usage is at least $\lceil r+\frac{\log{m}}{\log{b}} \rceil$. Putting these two cases together yields the desired lower bound.
    \end{proof}
    

\section{Acknowledgements}

Special thanks to Faith Ellen and Peter (Tian Ze) Chen for the many helpful discussions. I would also like to thank Alexander Spiegelman for noticing that this proof works for more than just wait-free algorithms.

	\bibliographystyle{plainurl}
	\bibliography{biblio}
\end{document}